\newtheorem{theorem}{Theorem}
\newtheorem{proposition}{Proposition}
\newtheorem{lemma}[theorem]{Lemma}
\begin{document}
\title{A New Algorithm for Two-Stage Group Testing} 

% %%% Single author, or several authors with the same affiliation:
\author{%
   \IEEEauthorblockN{Ilya~Vorobyev}
   \IEEEauthorblockA{
    Center for Computational and Data-Intensive Science and Engineering, \\
    Skolkovo Institute of Science and Technology\\
    Moscow, Russia 127051}
    \IEEEauthorblockA{
    Advanced Combinatorics and Complex Networks Lab, \\
    Moscow Institute of Physics and Technology\\  Dolgoprudny, Russia 141701}
    \IEEEauthorblockA{\textbf{Email}: vorobyev.i.v@yandex.ru}
 }

\maketitle

\begin{abstract}
   Group testing is a well-known search problem that consists in detecting of $s$
defective members of a set of $t$ samples by carrying out tests on properly chosen subsets
of samples. In classical group testing the goal is to find all defective elements by
using the minimal possible number of tests in the worst case. In this work, two-stage group testing is considered. Using the hypergraph approach we design a new search algorithm, which allows improving the known results for fixed $s$ and $t\to\infty$. For the case $s=2$ this algorithm achieves information-theoretic lower bound $2\log_2t(1+o(1))$ on the number of tests in the worst case. Also, the problem of finding $m$ out of $s$ defectives is considered.
\end{abstract}

%% The paper must be self-contained. However, if you are referring to
%% a full version for checking certain proofs, please provide the
%% publically accessible location below.  If the paper is completely
%% self-contained, you can remove the following line from your
%% submission.
%\textit{A full version of this paper is accessible at:}
%\url{http://isit2019.fr/} 

\section{Introduction}\label{sec::intro}

Group testing problem was introduced by Dorfman in~\cite{dorfman1943detection}. Suppose that we have a population of $t$ items(samples), some of which are defective. Our task is to find all defective items by performing a minimal number of tests.  The test is carried out on a properly chosen subset (pool) of the set of samples. The test outcome is positive if the tested set contains at least one defective element; otherwise, it is negative.  In this work consider the noiseless case, i.e., the outcomes are always correct.

In group testing, two types of algorithms are usually considered. In \textit{adaptive}
group testing, at each step the algorithm decides which group to test by observing
the responses of the previous tests. In \textit{non-adaptive} algorithm, all tests are carried out in parallel. 
\textit{Multistage} algorithm is a compromise solution to the group testing problem. In $p$-stage algorithms all tests are divided into $p$ stages. Tests from the $i$th stage may depend on the outcomes of the tests from the previous stages. 

Define $N_p(t,s)$ to be the minimal worst-case total number of tests needed to find all $s$ defective members of a set of $t$ samples using at most $p$ stages. Also define the optimal \textit{rate} of $p$-stage search algorithm as
$$
R_p(s)=\varliminf_{t\to \infty}\frac{\log_2t}{N_p(t, s)}.
$$
By the similar way we define the rate $R_{ad}(s)$ of fully adaptive algorithms.

In many applications, it is much cheaper and faster to perform tests in parallel, but non-adaptive algorithms require far more tests than adaptive ones.
More precise, for non-adaptive algorithms it is known~\cite{d1982bounds,ruszinko1994upper} that 
${R_1(s) = O(\log_2s/s^2)}$. In contrast, adaptive algorithms allow to achieve the rate ${R_{ad}=1/s}$.
Rather surprisingly, for 2-stage algorithms it was proved that ${O(s \log_2 t)}$ tests are already sufficient~\cite{de2005optimal,rashad1990random,d2014lectures}. This fact emphasizes the importance of multistage algorithms.

\subsection{Previous results}

We refer the reader to the monographs \cite{du2000combinatorial,Cicalese2013Fault-TolerantAlgorithms} for a survey on group testing and its applications. In this paper, only the number of test needed in the worst-case scenario is considered. For the problem of finding the average number of tests we refer the reader to~\cite{freidlina1975design} for $s=O(1)$ and to~\cite{mezard2011group, johnson2019performance} for $s\to\infty$.

For non-adaptive algorithms the best known asymptotic ($s\to\infty$) lower~\cite{d2014lectures} and upper~\cite{d2014bounds} bounds are as follows
$$
\frac{2\ln 2}{s^2}(1+o(1))\leq R_1(s)\leq \frac{4\log_2s}{s^2}(1+o(1)).
$$

In addition, we refer to the work \cite{coppersmith1998new}, where the best lower and upper bounds on $R_1(2)$ were established
$$
0.31349\leq R_1(2)\leq 0.4998.
$$

For the case of $p$-stage algorithms, $p>1$, the only known upper bound is information-theoretic one
\begin{equation}\label{eq::th_inf_bound}
R_p(s)\leq \frac{1}{s}, \quad p>1.
\end{equation}

Group testing algorithms with 2-stages can be constructed from disjunctive list-decoding codes~\cite{dyachkov1983survey} and selectors~\cite{de2005optimal}. Both approaches provide the bound $R_2(s)=\Omega(1/s)$, but best results for disjunctive list-decoding codes give a better constant~\cite{d2014bounds}
\begin{equation}
R_2(s)\geq \frac{\log_2e}{es}(1+o(1)).
\end{equation}

For the specific case $s=2$ the best result was obtained in~\cite{damaschke2014strict}
$$
R_2(2)\geq 0.4098.
$$

All of the lower bounds mentioned above were probabilistic. We want to refer to 2 constructive lower bounds for the case $s=2$. In~\cite{damaschke2013toolbox} the authors obtained a 2-stage algorithm with rate $0.4$. In~\cite{d2016hypergraph} an explicit 4-stage testing scheme with the rate $0.5$ was constructed. This bound matches the information-theoretic upper bound, i.e. the presented 4-stage algorithms allow to achieve the same rate as a fully adaptive algorithm.

The aim of this work is a further development of the bounds on the rates $R_2(s)$.

\subsection{Outline}
In Section~\ref{sec::pre}, we introduce the notation and describe the hypergraph approach to group testing problem. In Section~\ref{sec::2def}, we establish Theorem~\ref{th::2def} which states $R_2(2)=0.5$.
This result means that 2-stage testing schemes can achieve the same rate as fully adaptive algorithms for $s=2$.
Theorem~\ref{th::sDef} proved in Section~\ref{sec::sDef} is a generalization of Theorem~\ref{th::2def} for the case of an arbitrary number of defectives. Numerical results and comparison with the best previously known bounds are presented in Table~\ref{table::comp}.
In section~\ref{sec::m_out_of_s}, we consider the problem of finding  $m$ out of $s$ defective elements. Theorem~\ref{th::m_out_of_s} shows that we can find $[s/2]+1$ defective elements with the rate $1/s$. Finally, Section~\ref{sec::conclusion} concludes the paper.

\section{Preliminaries}\label{sec::pre}

Throughout the paper we use $t$ and $s$ for the number of elements and defectives, respectively. By $[t]$ we denote the set ${\{1, 2\ldots, t\}}$. The binary entropy function $h(x)$ is defined as usual $$h(x)=-x\log_2(x)-(1-x)\log_2(1-x).$$

A binary $(N \times t)$-matrix with $N$ rows $x_1, \dots, x_N$ and $t$ columns $x(1), \dots, x(t)$ %(codewords)
$$
X = \| x_i(j) \|, \quad x_i(j) = 0, 1, \quad i \in [N],\,j \in [t]
$$
is called a {\em binary code of length $N$  and size $t$}.
The number of $1$'s in the codeword $x(j)$, i.e., $|x(j)| = \sum\limits_{i = 1}^N \, x_i(j)= wN$,
is called the {\em weight} of $x(j)$, $j \in [t]$ and parameter $w$, $0<w<1$, is the \textit{relative weight}.
The quantity $R=\frac{\log_2t}{N}$ is called the rate of the code $X$.

Represent $N$ non-adaptive tests with a binary $N\times t$ matrix $X=\|x_{i,j}\|$ in the following way. 
%Associate each element of $[t]$ with a column and each test with a row. 
 An entry $x_{i,j}$ equal $1$ if and only if $j$th element is included in $i$th test.
Let $u \bigvee v$ denote the disjunctive sum of binary columns $u, v \in \{0, 1\}^N$. %Denote $j$th column of $X$ by $x(j)$.
For any subset $S\subset[t]$ define the binary vector $$r(X,S) = \bigvee\limits_{j\in S}x(j),$$
which later will be called the \textit{outcome vector}.
By $S_{un}$, ${|S_{un}|=s}$, denote an unknown set of defects. 

In the sequel, we consider 2-stage search algorithms. During the first stage, some pools are tested in parallel. Tests for the second stage depend on the outcomes of the first stage. 

%\section{A new algorithm}\label{sec::newAlgo}
Let us describe the hypergraph approach to group testing problem. Suppose that we use a binary $N\times t$ matrix $X$ at the first stage. As a result of performed tests we get the outcome vector ${y=r(X, S_{un})}$. Construct a hypergraph ${H(X, s, y)=(V, E)}$ in the following way. The set of vertexes $V$ coincides with the set of samples $[t]$. The set of edges consists of all sets $S\subset[t]$, $|S|=s$, such that $r(X, S)=y$. In other words, the set of edges of the hypergraph $H(X, s, y)$ represents all possible defective sets of size $s$. We want to design such a matrix $X$ for the first stage of an algorithm that the hypergraph $H(X, s, y)$ has some good properties, which will allow us to quickly find all defectives at the next few stages.

We can describe previously known algorithms using this terminology. Disjunctive list-decoding codes and selectors give a binary matrix $X$ such that the hypergraph $H(X, s, y)$ has only a constant amount of edges for all possible outcome vectors $y$. Then we can test all non-isolated vertices individually at the second stage. In the algorithm from~\cite{damaschke2013toolbox} the degree of all vertices of the graph $H(X, 2, y)$ is at most 1. This fact allows dividing vertices into 2 parts, each containing exactly one defective. In the algorithm from~\cite{d2016hypergraph} the graph $H(X, 2, y)$ has a small chromatic number, which also allows finding defectives quickly.

We design a new sufficient condition on matrix $X$ to guarantee that the hypergraph $H(X, s, y)$ has a constant amount of edges. Our condition is weaker than conditions for selectors or disjunctive list-decoding codes and allows to construct matrices with a higher rate. A step-by-step description of our algorithm is as follows:
\begin{enumerate}
\item Take a binary matrix $X$ such that the hypergraph $H(X, s, y)$ has only a constant amount of edges for all possible outcome vectors $y$. Use $X$ as a testing matrix at the first stage.
\item Using the outcome vector $y$ construct the hypergraph $H(X, s, y)$.
\item Test all non-isolated vertices of the hypergraph $H(X, s, y)$ at the second stage.
\end{enumerate}

We perform only a constant amount of tests at the second stage, therefore, the asymptotic($t\to\infty$) rate of such scheme is equal to the asymptotic rate of the code $X$. Lower bounds on the rate of such codes are derived in Theorem~\ref{th::2def} for $s=2$ and in Theorem~\ref{th::sDef} for $s>2$.

\section{Algorithm for 2 defectives}\label{sec::2def}
We apply the algorithm described in the previous section to the case $s=2$. Note that in this case, we have a graph instead of hypergraph $H$. 

\begin{theorem}\label{th::2def}
    $R_2(2)=0.5.$
\end{theorem}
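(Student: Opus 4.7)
The upper bound $R_2(2) \leq 1/2$ follows directly from the information-theoretic bound \eqref{eq::th_inf_bound}. For the matching lower bound $R_2(2) \geq 1/2$, I would apply the framework of Section~\ref{sec::pre}: for each $\epsilon > 0$ it suffices to exhibit a binary $N \times t$ matrix $X$ with $N = \lceil (2 + \epsilon) \log_2 t \rceil$ such that, for every pair $\{i, j\} \subset [t]$, the graph $H(X, 2, x(i) \vee x(j))$ has at most $C(\epsilon)$ edges, where $C(\epsilon)$ is a constant independent of $t$. The resulting two-stage algorithm uses $N + 2C(\epsilon) = (2 + \epsilon) \log_2 t + O_\epsilon(1)$ tests, and letting $\epsilon \to 0$ yields the claim.

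I would construct $X$ probabilistically: draw its columns independently and uniformly from binary vectors of relative weight $w = 1 - \sqrt{2}/2$. This specific $w$ is forced by the stationary condition $(1-w)^2 = w(2-w) = 1/2$ of the underlying rate expression; with this choice the typical weight of $x(i) \vee x(j)$ is $\alpha N = N/2$, and the expected number of competing pairs $\{k, l\}$ disjoint from $\{i, j\}$ evaluates to $\binom{t-2}{2} \cdot 2^{-N(1 + o(1))}$, which tends to $0$ at any rate $R = \log_2 t / N < 1/2$. The expected count of competing edges through a defective vertex is at most $(t-2) \cdot 2^{-N(1-w)h(w)(1 + o(1))}$, which is $o(1)$ once $R \leq (1-w) h(w) \approx 0.617$; so $\mathbb{E}[A_{ij}] = O(t^{-\delta})$ for some $\delta = \delta(\epsilon) > 0$, where $A_{ij}$ counts competing pairs $\{k, l\} \neq \{i, j\}$ satisfying $x(k) \vee x(l) = x(i) \vee x(j)$.

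The main obstacle is converting this in-expectation bound into a worst-case bound over all $\binom{t}{2}$ candidate defective pairs simultaneously. For this I would appeal to a Chernoff-type tail inequality for the weakly-dependent sum of indicators defining $A_{ij}$, yielding an estimate of the form $\Pr(A_{ij} \geq C) \leq (e \mathbb{E}[A_{ij}]/C)^C$; choosing a constant $C = C(\epsilon)$ with $C \delta > 2$ then gives $\Pr(A_{ij} \geq C) = o(t^{-2})$. A union bound over all $\binom{t}{2}$ defective pairs shows that, with probability $1 - o(1)$, every $A_{ij} < C$, so every graph $H(X, 2, y)$ has at most $C + 1$ edges and at most $2C + 2$ non-isolated vertices. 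Plugging this $X$ into the generic algorithm of Section~\ref{sec::pre} concludes the argument. Establishing the Chernoff-type inequality is the principal technical step; I would handle it by a direct moment calculation that accounts for the overlap structure of $k$-tuples of competing pairs.
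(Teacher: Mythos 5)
Your construction is the same as the paper's (columns i.i.d.\ uniform of relative weight $w=1-\sqrt{2}/2$, rate pushed to any $R<1/2$, and the same key exponents: competing disjoint pairs cost $2^{-N(1+o(1))}$ each, competing edges through a fixed vertex cost $2^{-0.617N}$), but you diverge at the step that converts the expectation bound into a uniform bound on the number of edges of every $H(X,2,y)$. The paper handles this with a purely combinatorial dichotomy (its Proposition~\ref{pr::graph}): if a graph has no vertex of degree $\ge L$ and no matching of size $\ge L$, it has fewer than $2L^2$ edges. So it only needs to kill two extremal configurations --- stars and matchings --- and each of those is handled by a plain first-moment union bound, since the event ``degree $\ge L$'' (resp.\ ``matching of size $\ge L$ through $v$'') supplies $L$ (resp.\ $2L-1$) explicit witness vertices whose joint probability factorizes as $p_1^L$ (resp.\ $p_1p_2^{L-1}$); taking $L$ large makes $t^Lp_1^L$ and $t^{2L-1}p_2^{L-1}$ beat the $t\cdot 2^N$ union bound. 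Your route instead asks for a tail bound $\Pr(A_{ij}\ge C)\le (e\,\mathbb{E}[A_{ij}]/C)^C$ for a sum of \emph{dependent} indicators. Be aware that this inequality is not automatic: it amounts to bounding the $C$-th factorial moment $\mathbb{E}\binom{A_{ij}}{C}$ by $(\mathbb{E}A_{ij})^C/C!$ up to constants, and overlapping $C$-tuples of pairs (e.g.\ $C$ pairs through a common vertex, contributing $t^{C+1}p_0p_1^{C}$ with $p_0\approx 2^{-0.383N}$) do not factorize; one must check they are dominated, which happens only because $R>0.383$ here. So the ``direct moment calculation that accounts for the overlap structure'' you defer is precisely the whole difficulty, and the paper's degree-or-matching reduction is the device that makes it clean: it shows that among all overlap patterns only the two extremes matter. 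Your version is workable (group $C$-tuples by the isomorphism type of the graph they span and bound each component by a spanning tree, giving $t^vp_0p_1^{v-1}$ per component), but as written it is a sketch at the critical point. Two smaller remarks: (i) you only treat the typical weight $q=1/2$ of $y=x(i)\vee x(j)$, whereas one must take the supremum over all realizable $q\in(w,\min(2w,1))$ as the paper does (it verifies the supremum of $qh(w/q)+wh((q-w)/w)-2h(w)$ equals $-1$ exactly at $w=1-\sqrt2/2$); (ii) your union bound over $\binom{t}{2}$ pairs rather than over all $2^N$ outcome vectors is legitimate and in fact slightly less wasteful than the paper's.
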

As it was mentioned before, this bound matches the bound for a fully adaptive algorithm. In addition, the algorithm from this theorem can be used to find not only $2$ defectives but also at most $2$ defectives. But to keep things simple, here we consider only the case of exactly 2 defectives.

    \begin{proof}[Proof of Theorem~\ref{th::2def}]
    Consider a random matrix $X$ of size $N\times t$, each column of which is chosen independently and uniformly from the set off all columns of weight $wN$, $0<w<1$. To keep the notation simple, we ignore the fact that $wN$ may not be an integer. Fix the constant $L$ and consider a graph $G_y=H(X, 2, y)$, where vector $y$ is from $\{0,\;1\}^N$.
    
    Call the index $v\in [t]$ a $y$-bad index of the first type if the degree of the vertex $v\in V$ in the graph $G_y$ is at least $L$. 
    Call the index $v\in [t]$ a $y$-bad index of the second type if in the graph $G_y$ the vertex $v\in V$ is included in some matching, which contains at least $L$ edges. Recall that matching is a set of edges without common vertexes.
   Finally, call the index $v\in [t]$ a bad index if there exists a vector $y\in \{0,\;1\}^N$ such that $v$ is a $y$-bad index of the first or the second type.
   
    The following two propositions imply the theorem.
    \begin{proposition}\label{pr::graph}
    If the maximum vertex degree and the maximum cardinality of a matching in a graph $G=(V, E)$ are less than  $L$, then ${|E|< 2L^2}$.
    \end{proposition}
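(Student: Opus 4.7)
The plan is to use the maximum matching as a small ``hitting set'' for all edges. Let $M \subseteq E$ be a maximum matching in $G$. By assumption $|M| \leq L-1$, so the set of matched vertices $V(M) = \bigcup_{e \in M} e$ has cardinality $|V(M)| \leq 2(L-1)$.

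The key observation is that \emph{every} edge of $G$ must be incident to at least one vertex of $V(M)$. Indeed, if some edge $e \in E$ had both endpoints outside $V(M)$, then $M \cup \{e\}$ would still be a matching, contradicting the maximality of $M$. Hence every edge of $E$ is counted (at least once) in the sum $\sum_{v \in V(M)} \deg(v)$.

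The degree hypothesis now closes the argument: each vertex has degree strictly less than $L$, so
\[
|E| \;\leq\; \sum_{v \in V(M)} \deg(v) \;\leq\; |V(M)| \cdot (L-1) \;\leq\; 2(L-1)^2 \;<\; 2L^2,
\]
which is the desired bound.

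There is essentially no obstacle here. The only subtlety is recognizing that a maximum matching is an edge-cover hitting set in the sense above (the standard Gallai-type fact that the complement of a maximal matching's vertex set is independent), and then applying the degree bound to control the edges through those at most $2(L-1)$ ``hub'' vertices.
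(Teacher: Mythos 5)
Your proof is correct and follows essentially the same route as the paper: take a maximum matching, observe its endpoint set hits every edge by maximality, and bound $|E|$ by summing degrees over those at most $2(L-1)$ vertices. You even obtain the slightly sharper bound $2(L-1)^2$, which of course still gives $|E|<2L^2$.
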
    
    \begin{proposition}\label{pr::badVertices}
    For any $R<0.5$ and $w=1-\sqrt{2}/2$, there exists an integer $L$ such that for $t=\lfloor2^{RN}\rfloor$ the mathematical expectation of the number of bad indexes less than $1$ for $N$ big enough.
    \end{proposition}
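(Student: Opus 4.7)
The plan is to bound the expected number of bad indices by summing the contributions of Type 1 and Type 2 events and showing each is below $1/2$. Since the columns of $X$ are i.i.d.\ uniform on the Hamming sphere of weight $wN$, by linearity of expectation it suffices to bound, for a single typical column $v$, the probability of each bad event and then multiply by $t=2^{RN}$. Both probabilities will be estimated by a union bound over the auxiliary combinatorial structure (an $L$-element neighbourhood of $v$ for Type 1, a matching of $L$ edges containing $v$ for Type 2) combined with the entropy asymptotics $\binom{\alpha N}{\beta N}\sim 2^{\alpha h(\beta/\alpha)N}$, summed over the possible output weights $|y|=\gamma N\in[wN,2wN]$.

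For Type 1, conditional on $x(v)$ and on $L$ other columns $u_1,\ldots,u_L$, the event that there exists $y$ with $x(v)\vee x(u_i)=y$ for all $i$ has conditional probability $\sum_\gamma \binom{(1-w)N}{(\gamma-w)N}\,q(\gamma)^L$, where $q(\gamma)=\binom{wN}{(2w-\gamma)N}/\binom{N}{wN}$ is the chance that $x(v)\vee x(u)=y$ for a specified $y\supseteq x(v)$ of relative weight $\gamma$. Multiplying by $t\binom{t-1}{L}$ and passing to the entropy exponent, the condition for $o(1)$ becomes $R(L+1)<-\max_\gamma g_L(\gamma)$, where $g_L(\gamma)=(1-w)h(\tfrac{\gamma-w}{1-w})+L[wh(\tfrac{2w-\gamma}{w})-h(w)]$. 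For $L\to\infty$ the dominant $L$-linear term is maximized at $\gamma=3w/2$ with value $w-h(w)$, yielding the asymptotic Type 1 threshold $h(w)-w$; for $w=1-\sqrt{2}/2$ this is roughly $0.58>1/2$, so Type 1 is not binding.

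For Type 2, the decisive case, the key observation is that in a matching of $L$ disjoint edges realizing a common $y$, the $L$ pair-events are mutually independent. Union-bounding over $v$, the $2L-1$ remaining vertices, and $y$ yields
\[
E[\#\text{Type 2 bad}]\leq t^{2L}\sum_\gamma\binom{N}{\gamma N}\,p_{\mathrm{pair}}(\gamma)^L,
\]
where $p_{\mathrm{pair}}(\gamma)$ is the probability that a random pair has disjunction equal to a specified $y$ with $|y|=\gamma N$. Since $\binom{N}{\gamma N}p_{\mathrm{pair}}(\gamma)$ is the probability mass function of $|x(a)\vee x(b)|/N$, the exponent $h(\gamma)+E(\gamma,w)$ is nonpositive and vanishes only at the mode $\gamma^\ast=2w-w^2$. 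For the specific weight $w=1-\sqrt{2}/2$ we have $\gamma^\ast=1/2$, so this mode coincides with the maximum of $h$, and unimodality forces $\max_\gamma[h(\gamma)+LE(\gamma,w)]=(1-L)h(1/2)=1-L$. The rate condition then reduces to $R<(L-1)/(2L)$.

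The main obstacle, and the reason for the particular weight $w=1-\sqrt{2}/2$, is precisely this last optimization: the limiting Type 2 rate threshold $h(2w-w^2)/2$ reaches the information-theoretic barrier $1/2$ only when $2w-w^2=1/2$, which pins down $w=1-\sqrt{2}/2$ in $(0,1)$. With this choice, given any $R<1/2$ one picks $L$ large enough that $(L-1)/(2L)>R$; then both $E[\#\text{Type 1 bad}]$ and $E[\#\text{Type 2 bad}]$ drop below $1/2$, so $E[\#\text{bad}]<1$, as required.
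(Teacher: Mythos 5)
Your proof follows essentially the same route as the paper's: a union bound over the two types of bad configurations, exponential asymptotics of binomial coefficients, independence of the pair-events across a matching (your $p_{\mathrm{pair}}(\gamma)^L$ versus the paper's $p_1p_2^{L-1}$), and the identification of $w=1-\sqrt{2}/2$ as the weight at which the pair exponent equals $-1$, giving the threshold $(L-1)/(2L)\to 1/2$. Your finer count of candidate outcome vectors ($\binom{(1-w)N}{(\gamma-w)N}$ and $\binom{N}{\gamma N}$ in place of the paper's cruder $2^N$) and your separate Type~1 threshold $h(w)-w>1/2$ are harmless refinements; the paper instead disposes of Type~1 via $p_1^2\le p_2$.

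The one step that does not close as written is ``unimodality forces $\max_\gamma[h(\gamma)+LE(\gamma,w)]=1-L$.'' The pmf observation gives only $h(\gamma)+E(\gamma,w)\le 0$ with equality at $\gamma^\ast=1/2$; combined with $h\le 1$ this yields $h+LE=(h+E)+(L-1)E\le(1-L)h(\gamma)$, whose supremum over $\gamma\in[w,2w]$ is $(1-L)h(w)\approx-0.87(L-1)$, i.e.\ only the insufficient threshold $h(w)/2<1/2$. What you actually need is the stronger fact $\sup_\gamma E(\gamma,w)=-1$, attained at $\gamma=1/2$ --- precisely the optimization the paper asserts (``the maximal value of the left-hand side is equal to $-1$''). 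Your two observations do deliver this once you add that $E(\cdot,w)$ is concave in $\gamma$: since $\gamma=1/2$ is a critical point of both $h+E$ and $h$, it is a critical point of $E$, and $E'(\gamma)=\log_2\frac{\gamma(2w-\gamma)}{(\gamma-w)^2}$ is decreasing on $(w,2w)$, so that critical point is the global maximum of $E$. With that one line supplied, the argument is complete; note also that your general formula $h(2w-w^2)/2$ for the Type~2 threshold is valid only at $w=1-\sqrt{2}/2$, where the maximizers of $E$ and of $h+E$ happen to coincide.
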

    Let us show how the Theorem~\ref{th::2def} can be deduced from these two propositions. Indeed, take the parameters $w$, $R$, $L$ from Proposition~\ref{pr::badVertices}. Thus, for $N$ big enough there exists a $N\times t$ matrix $X$ without bad indexes. 
%Use that matrix $X$ as a testing matrix at the first stage. There is no bad indexes, therefore, 
It means that for any outcome vector $y=r(X, S_{un})$ graph $G_y=H(X, 2, y)$ does not contain a matching of size at least $L$ or a vertex with a degree at least $L$. Applying the Proposition~\ref{pr::graph} to the graph $G_y$, we conclude that it has less than $2L^2$ edges. Using the matrix $X$ as a testing matrix at the first stage we obtain a search procedure with asymptotic rate at least $R$.
%We finish the search procedure by testing all non-isolated vertexes of the graph $G_y$ individually.

Now we prove propositions~\ref{pr::graph} and~\ref{pr::badVertices}.

\begin{proof}[Proof of Proposition~\ref{pr::graph}]
Fix an arbitrary maximum matching $M\subset E$, $|M|< L$, in the graph $G=(V, E)$. Denote the set of endpoints of $M$ as $U\subset V$, $|U|< 2L$. Since $M$ is a maximum matching, every edge $e$ has at least one endpoint in the set $U$. Therefore, the total number of edges is upper bounded by 
$$
\sum_{v\in U}deg(v)< 2L^2.
$$
\end{proof}

\begin{proof}[Proof of Proposition~\ref{pr::badVertices}]
Denote the event that a fixed index $v$ is a $y$-bad index of the first(second) type for some outcome vector $y$ as $B_{v, y, 1}$($B_{v, y, 2}$).
The probability $\Pr(B_{v, y, 1})$ for a vector $y$ of weight $qN$ can be upper bounded by the probability that there exists a non-ordered collection of $L$ other vertices $v_1$, $v_2$, \ldots, $v_{L}$, such that the graph $G_y$ contains edges $(v, v_i)$ for $1\leq i \leq L$, therefore,
\begin{equation}\label{eq::bad1}
\Pr(v, y, 1)\leq \binom{t-1}{L}p_1^{L}<t^{L}p_1^{L},
\end{equation}
where $p_1=\binom{wN}{(q-w)N}/\binom{N}{wN}$ is a probability that for some index $u\ne v$ the equation $x(v)\bigvee x(u)=y$ holds.

The probability $\Pr(B_{v, y, 2})$ for a vector $y$ of weight $qN$ is at most the probability that there exists an ordered collection of $2L-1$ other vertices $v_1$, $v_2$, \ldots, $v_{2L-1}$, such that the graph $G_y$ contains edges $(v_{2i}, v_{2i+1})$ for $1\leq i \leq L-1$, and an edge $(v, v_1)$. Hence,
\begin{equation}\label{eq::bad2}
\Pr(v, y, 2)\leq \binom{t-1}{2L-1}(2L-1)!p_1p_2^{L-1}< t^{2L-1} p_2^{L-1},
\end{equation}
where $p_2=\frac{\binom{qN}{wN}\binom{wN}{(q-w)N}}{\left(\binom{N}{wN}\right)^2}$ is a probability that for some indexes $u_1, u_2$ the equation $x(u_1)\bigvee x(u_2)=y$ holds true.

Therefore, the mathematical expectation of the number of bad indexes can be upper bounded as follows
\begin{multline}
t2^{N} \sup\limits_{q\in (w,\;\min(2w,1))}(t^{L} p_1^{L}+t^{2L-1} p_2^{L-1})\\
< 2^N \sup\limits_{q\in (w,\;\min(2w,1))}(2^{RN(L+1)}p_1^L+2^{RN(2L+2)}p_2^L).
\end{multline}

Take $L$ such that $R(L+1)-0.5L=-1-\varepsilon$, $\varepsilon>0$. Then the mathematical expectation of the number of bad indexes is less than
$$
2^{-\varepsilon N} \sup\limits_{q\in (w,\;\min(2w,1))}((2^{0.5N}p_1)^L+(2^{N}p_2)^L).
$$
To finish the proof of the proposition it is sufficient to show that ${p_1<2^{-0.5N(1+o(1))}}$ and ${p_2<2^{-N(1+o(1))}}$ for all $q$. It is easy to see that ${p_1^2\leq p_2}$, hence it is enough to verify the inequality ${p_2<2^{-N(1+o(1))}}$.

Taking the logarithm and dividing by $N$, we obtain
\begin{multline}
 \sup\limits_{q\in (w,\;\min(2w,1))}qh(w/q)+wh((q-w)/w)-2h(w)\\
<-1+o(1).
\end{multline}
For $w=1-\sqrt{2}/2$ the maximal value of the left-hand side is equal to $-1$, therefore, the inequality holds.

\end{proof}
The theorem is proved.
\end{proof}

\section{Algorithm for $s$ defectives}\label{sec::sDef}
To construct a matrix for the first stage of our algorithm for $s>2$  we must introduce some new notions. 
Fix an integer $L$
and consider a $s$-uniform hypergraph $H$. Call the set of edges $e_1, e_2, \ldots, e_L$ a $(s, L, k)$-\textit{bad configuration} if $e_i\cap e_j=U$, $|U|=k$, for any $i$ and $j$. In other words, $(s, L, k)$-bad configuration consists of $L$ edges such that the intersection of every two edges is the same set of size $k$. 
Call a code $X$ a \textit{$(s, L, K)$-good} code, $K\subset\{0, 1,\ldots, s-1\}$, if the hypergraph $H(X, s, y)$ doesn't contain a $(s, L, k)$-bad configuration for any outcome vector $y$ and integer $k\in K$.
Let $N(t, s, L, K)$ be the minimal length of $(s, L, K)$-good code of size $t$. The asymptotic rate $R(s, L, K)$ of $(s, L, K)$-good code is defined as follows
\begin{equation}
R(s, L, K)=\varliminf_{t\to\infty} \frac{\log_2t}{N(t, s, L, K)}.
\end{equation}
Denote the limit $\lim\limits_{L\to\infty}R(s, L, K)$ by $R(s, \infty, K)$.

The following lemma demonstrates the connection between $(s, L, K)$-good codes and two-stage group testing problem.
\begin{lemma}\label{lem::sDef}
\begin{equation}\label{eq::lemma}
R_2(s)\geq R(s, \infty, \{0, 1, \ldots, s - 1\}).
\end{equation}
%where $K=\{0, 1, \ldots, s - 1\}$.
\end{lemma}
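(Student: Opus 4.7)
The plan is to exhibit a two-stage algorithm whose first stage uses a $(s,L,\{0,1,\ldots,s-1\})$-good code $X$ of near-optimal rate, and whose second stage requires only $O_{s,L}(1)$ additional tests. Since the added second-stage cost is a constant independent of $t$, the asymptotic rate of the whole scheme coincides with the rate of $X$, and letting $L\to\infty$ yields the claimed inequality.

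The crucial combinatorial observation is that a $(s,L,k)$-bad configuration is precisely a \emph{sunflower} with $L$ petals and core of size $k$ in the language of Erd\H{o}s--Rado. Requiring $K=\{0,1,\ldots,s-1\}$ rules out sunflowers with cores of every admissible size (the case $k=s$ would mean the edges coincide, which is forbidden by the definition of hypergraph). Hence, after the first stage, for every possible outcome vector $y=r(X,S_{un})$ the hypergraph $H(X,s,y)$ contains no $L$-petal sunflower at all. By the Erd\H{o}s--Rado sunflower lemma, any $s$-uniform hypergraph with more than $s!(L-1)^s$ edges must contain such a sunflower; therefore
\begin{equation}
|E(H(X,s,y))|\leq s!(L-1)^s,
\end{equation}
and, in particular, the number of non-isolated vertices is bounded by $s\cdot s!(L-1)^s$, a constant depending only on $s$ and $L$.

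The second stage then tests each non-isolated vertex individually; this correctly identifies $S_{un}$ because $S_{un}$ is one of the edges of $H(X,s,y)$, and every vertex of that edge is non-isolated. Thus
\begin{equation}
N_2(t,s)\leq N(t,s,L,\{0,1,\ldots,s-1\})+s\cdot s!(L-1)^s.
\end{equation}
Dividing by $\log_2 t$, taking $\varliminf_{t\to\infty}$, and then letting $L\to\infty$ yields $R_2(s)\geq R(s,\infty,\{0,1,\ldots,s-1\})$.

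I expect no serious obstacle: the only nontrivial ingredient is the sunflower lemma, which applies directly once one recognizes that bad configurations are sunflowers. The minor bookkeeping items are (i) checking that the $k=s$ case is automatically excluded by distinctness of hyperedges, so that forbidding cores of sizes $0,\ldots,s-1$ really forbids \emph{all} sunflowers, and (ii) verifying that the constant overhead $s\cdot s!(L-1)^s$ really is absorbed into the $(1+o(1))$ factor when passing to the rate, which is immediate since $t\to\infty$ while $s,L$ remain fixed before the final $L\to\infty$ limit.
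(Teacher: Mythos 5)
Your proposal is correct, and the overall architecture is the same as the paper's: use the $(s,L,\{0,\ldots,s-1\})$-good code at the first stage, show that every resulting hypergraph $H(X,s,y)$ has a number of edges bounded by a constant depending only on $s$ and $L$, and test all non-isolated vertices individually at the second stage, so that the constant overhead disappears in the rate. Where you genuinely diverge is in the combinatorial core. The paper proves the edge bound (its Proposition~\ref{pr::hypergraph}) in two steps: it colors the pairs of hyperedges by the \emph{cardinality} of their intersection, invokes Ramsey's theorem to extract a large family in which all pairwise intersections have the same size $k$, and then applies a pigeonhole argument inside that family to force the intersections to be the same \emph{set}, producing a bad configuration. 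You instead observe that a $(s,L,k)$-bad configuration is exactly a sunflower with $L$ petals and core of size $k$, that distinctness of hyperedges rules out core size $s$, and hence that forbidding all $k\in\{0,\ldots,s-1\}$ forbids all $L$-petal sunflowers; the Erd\H{o}s--Rado lemma then gives the explicit bound $s!(L-1)^s$. Your identification of the sunflower structure is accurate (equal pairwise intersections force pairwise disjoint petals), and your route buys an explicit and far smaller constant than the Ramsey number $R(c_0,\ldots,c_0)$ used in the paper, at the cost of importing a named lemma rather than arguing from scratch. Since only the existence of \emph{some} constant bound matters for the rate computation, both arguments are equally sufficient for the lemma; the remaining bookkeeping in your write-up (monotonicity in $L$, absorption of the additive constant into the $\varliminf$) is handled correctly.
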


\begin{proof}[Proof of Lemma~\ref{lem::sDef}]
Use $(s, L, \{0, 1, \ldots, s - 1\})$-good code $X$ of size $t$ as a test matrix at the first stage. Then for any outcome vector $y$ hypergraph $H(X, s, y)$ doesn't contain $(s, L, k)$-bad configurations for $k=0, 1, \ldots s-1$. 

\begin{proposition}\label{pr::hypergraph}
If a $s$-uniform hypergraph $H=(V, E)$ doesn't contain $(s, L, k)$-bad configurations for $k=0, 1, \ldots s-1$, then the number of edges $|E|$ is at most $c(s, L)$, where $c(s, L)$ doesn't depend on $|V|$.
\end{proposition}
\begin{proof}[Proof of Proposition~\ref{pr::hypergraph}]

Suppose, seeking a contradiction, that a hypergraph $H=(V, E)$ without bad configurations contains more than $c(s, L)$ edges. Exact formula for $c(s, L)$ will be specified later. Construct a complete graph $\hat{G}=K_{|E|}$, which vertex set {$V=\{e_1, \ldots, e_{|E|}\}$} corresponds to the edges of hypergraph $H$. Color the edge $f=(e_1, e_2)$ of the graph $\hat{G}$ in color $i+1$, if the cardinality of the intersection $e_1\cap e_2$ is equal to $i$, $i=0,1,\ldots, s-1$.

Recall that Ramsey number $R(c_1, \ldots, c_l)$ is a minimal integer $n$ such that if the edges of a complete graph $K_n$ are colored with $l$ different colors, then for some $i$ between $1$ and $c$, the graph must contain a complete subgraph of size $c_i$ whose edges are all color $i$.  Here we need only the fact that the number $R(c_1, \ldots, c_l)$ exists.

    Take $c(s, L)=R(\underbrace{c_0(s, L), \ldots, c_0(s, L)}_{s})$. Then for some $k$ there exists a set $E_0$ of edges $e_1$, \ldots, $e_{c_0(s,L)}$ from the hypergraph $H$ such that $|e_i\cap e_j|=k$ for any ${1\leq i<j\leq c_0(s,L)}$.
    
    Consider an edge $e_1$. Any other edge from the set $E_0$ has $k$ common vertexes  with $e_1$. Taking $c_0(s,L)>\binom{s}{k}(L-1)$, we obtain that some $k$ vertexes $v_1, \ldots, v_k$ belong to another $L-1$ edges $w_1, \ldots w_{L-1}$ from the set $E_0$. But then the set of edges $e_1, w_1, \ldots, w_{L-1}$ forms bad configuration of type $k$. This contradiction proves the proposition.
\end{proof}
Using Proposition~\ref{pr::hypergraph} we conclude that the hypergraph $H(X, s, y)$ has at most $c(s, L)$ edges. Thus, we can find all defectives by testing all non-isolated vertices individually at the second stage. The number of tests at the second stage doesn't depend on the number of elements $t$, therefore, taking limits $t\to\infty$ and $L\to\infty$ we obtain the inequality~\eqref{eq::lemma}.

\end{proof}

To obtain a lower bound on the rate $R(s, \infty, K)$ we use a random coding method.% on the constant-weight ensemble.

\begin{lemma}\label{lem::goodCodes}
Define a function $A(s, w, q)$ 
\begin{multline}
A(s, w, q) = (1-q)\log_2(1-q)+
q\log_2\left(\frac{wy^s}{1-y}\right)\\
+sw\log_2\frac{1-y}{y}+sh(w), \quad w<q<\min(1, sw),
\label{eq::functionA}
\end{multline}
where $y\in(0,\;1)$ is a unique root of the equation
\begin{equation}
q=w\frac{1-y^s}{1-y}.\label{eq::y}
\end{equation}

Define $\underline{R}(s, k, w)$ as follows.
\begin{equation}
\underline{R}(s, k , w)=\min\left\{\underline{R}_1(s, k , w), \underline{R}_2(s, k , w)\right\},
\end{equation}
where 
\begin{equation}\label{thCond1}
\underline{R}_1(s, k , w)=\inf\limits_
{\substack{\max(w, kw/2)\leq q\\
q \leq \min((s-k)w,1)}}
\frac{A(s-k, w, q)-kw+h(q)}{s-k},
\end{equation}
\begin{equation}\label{thCond2}
\underline{R}_2(s, k , w)=\inf\limits_
{\substack{w\leq q\leq1\\
q\leq (s-k)w\\
q\leq kw/2}}
\frac{A(s-k, w, q)-kwh\left(\frac{q}{kw}\right)+h(q)}{s-k}.
\end{equation}
Then
\begin{equation}
R(s, \infty, K)\geq \sup\limits_{0<w<1}\min\limits_{k\in K}\underline{R}(s, k , w)
\end{equation}
%If for some $w$ and $R$ conditions 
%\begin{multline}\label{thCond1}
%\max\limits_{\max(w, kw/2)\leq q\leq (s-k)w}(s-k)R-A(s-k, w, q)\\
%+kw-h(q)<0
%\end{multline}
%and
%\begin{multline}\label{thCond2}
%\max\limits_{w\leq q\leq \min((s-k)w, kw/2)}(s-k)R-A(s-k, w, q)\\
%+kwh\left(\frac{q}{kw}\right)-h(q)<0,
%\end{multline}
%hold for every $k=0, 1, \ldots, s-1$, then $R_2(s)\geq R$.
\end{lemma}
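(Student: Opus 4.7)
The plan is to apply the random coding method. I would consider a random binary code $X$ of length $N$ and size $t=\lfloor 2^{RN}\rfloor$ with each column drawn independently and uniformly from the set of binary vectors of weight $wN$, and show that for any $R$ strictly below the claimed bound, the expected total number of $(s,L,k)$-bad configurations in $H(X,s,y)$, summed over all outcome vectors $y\in\{0,1\}^N$ and all $k\in K$, tends to $0$ as $N\to\infty$. Combined with the limit $L\to\infty$, this produces an $(s,L,K)$-good code achieving the rate $\sup_w\min_{k\in K}\underline{R}(s,k,w)$.

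Fix $k\in K$ and an outcome $y$ of relative weight $q$. A $(s,L,k)$-bad configuration in $H(X,s,y)$ is parameterized by a common set $U\subset[t]$ of size $k$ and pairwise disjoint extension sets $W_1,\ldots,W_L\subset[t]\setminus U$ each of size $s-k$, with the requirement $\bigvee_{v\in U\cup W_i}x(v)=y$ for every $i$. Setting $z=\bigvee_{v\in U}x(v)$ and conditioning on $z$ (which must satisfy $z\leq y$ componentwise), the $L$ extension disjunctions become independent and each must equal some vector $y'$ with $y\setminus z\subseteq y'\subseteq y$. The expected count therefore factorizes as (i) roughly $t^{k+L(s-k)}$ configuration choices, (ii) the $\binom{N}{qN}$ outcome vectors of weight $qN$, (iii) a probability factor for the $U$-disjunction depending on the weight $r$ of $z$, and (iv) the $L$-th power of the conditional probability of a single extension given $z$.

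For the asymptotic analysis, I would parametrize by the pair $(q,r)$ and apply a saddle-point argument together with Stirling's approximation. The function $A(s-k,w,q)$ captures the exponent of the probability that $s-k$ independent weight-$wN$ columns disjunction to a fixed vector of weight $qN$; the auxiliary root of $q=w(1-y^s)/(1-y)$ is the standard conjugate parameter arising from the Legendre transform. The two regimes in the statement arise from the optimization of the $z$-contribution: when $q\geq kw/2$ the typical behavior of $z$ (weight close to $kw$) is already compatible with the constraint $z\leq y$, and the resulting exponent is the linear term $-kw$; when $q<kw/2$ the constraint $z\leq y$ forces a large-deviation event for the weight of $z$ whose rate is the binomial entropy $kw\cdot h(q/(kw))$. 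Combining all factors, dividing through by $L(s-k)$, and sending $L\to\infty$ shows that the exponent of the expected count becomes strictly negative whenever $R<\underline{R}(s,k,w)$, so the first moment method produces the required code; finally, one takes the infimum over $k\in K$ and the supremum over $w\in(0,1)$.

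The main obstacle is the saddle-point identification of $A(s-k,w,q)$ and the correct bookkeeping of the two regimes in the $r$-optimization, since $A$ is itself expressed through an implicit conjugate variable and all counts must be uniform in $q$ over the prescribed ranges. Subsidiary technicalities include controlling polynomial prefactors such as the $L!$ from unordering the extensions and the binomial coefficients arising from weight counting, and taking a union bound across the finite set $K$; these are absorbed into $o(1)$ terms, after which the standard first-moment argument closes the proof.
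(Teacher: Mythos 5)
Your proposal follows essentially the same route as the paper's proof: a random constant-weight code, a first-moment bound on the expected number of $(s,L,k)$-bad configurations that factorizes into the choice of the common part $U$, its disjunction $z$, and the $L$ independent extensions, followed by an exponent optimization that produces the two regimes $\underline{R}_1$ and $\underline{R}_2$. The only notable differences are cosmetic: the paper imports the closed form of $A$ from an earlier work rather than rederiving it by a saddle-point argument, and in the paper the case split at $kw/2$ is governed by the weight of the extension union (the variable appearing in $A(s-k,w,\cdot)$) after optimizing out the outcome weight, rather than by the outcome weight itself as your sketch suggests.
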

The proof of this lemma can be found in the Appendix.

Lemma~\ref{lem::goodCodes} and Lemma~\ref{lem::sDef} give us
\begin{theorem}\label{th::sDef}
\begin{equation}
R_2(s)\geq \sup\limits_{0<w<1}\min\limits_{0\leq k < s}\underline{R}(s, k , w)
\end{equation}
\end{theorem}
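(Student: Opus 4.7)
The plan is to obtain Theorem~\ref{th::sDef} by a two-line composition of the preceding lemmas. Lemma~\ref{lem::sDef} already reduces the two-stage testing rate to the asymptotic rate of $(s,L,K)$-good codes for the specific choice $K=\{0,1,\dots,s-1\}$, and Lemma~\ref{lem::goodCodes} gives a random-coding lower bound on $R(s,\infty,K)$ for an arbitrary subset $K$. So the theorem should follow by instantiating both lemmas with $K=\{0,1,\dots,s-1\}$ and observing that $\min_{k\in K}$ is exactly $\min_{0\leq k<s}$.

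Concretely, first I would fix a relative weight $w\in(0,1)$ and a large integer $L$. By Lemma~\ref{lem::goodCodes} there exists a sequence of $(s,L,\{0,1,\dots,s-1\})$-good codes whose rate tends, as $L\to\infty$, to at least $\min_{0\leq k<s}\underline{R}(s,k,w)$. Plugging such a code into the first-stage testing matrix and invoking the argument already carried out inside Lemma~\ref{lem::sDef} — namely Proposition~\ref{pr::hypergraph}, which bounds the number of edges of $H(X,s,y)$ by a constant $c(s,L)$ depending only on $s$ and $L$ — shows that the second stage needs at most $s\,c(s,L)$ additional tests. Since this is $O(1)$ in $t$, the overall asymptotic rate equals that of the first-stage code. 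Letting $L\to\infty$ and then taking the supremum over $w$ delivers the bound $R_2(s)\geq \sup_{0<w<1}\min_{0\leq k<s}\underline{R}(s,k,w)$.

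The only non-mechanical ingredient here is Lemma~\ref{lem::goodCodes}, whose random-coding proof is deferred to the appendix. Its hard part is estimating, for a random constant-weight matrix with column weight $wN$, the probability that some $y$-bad configuration of intersection size $k$ appears, then optimising the resulting exponential bound over $q=|y|/N$; the two regimes $q\leq kw/2$ and $q\geq kw/2$ produce the two formulas $\underline{R}_1$ and $\underline{R}_2$ and explain the structure of $A(s,w,q)$. Granting that lemma, however, no further work is required: Theorem~\ref{th::sDef} is literally the concatenation of the two inequalities $R_2(s)\geq R(s,\infty,\{0,\dots,s-1\})$ and $R(s,\infty,\{0,\dots,s-1\})\geq \sup_w \min_k \underline{R}(s,k,w)$, and no additional combinatorial or probabilistic step is needed.
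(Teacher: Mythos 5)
Your proposal matches the paper exactly: the paper derives Theorem~\ref{th::sDef} by combining Lemma~\ref{lem::sDef} (which gives $R_2(s)\geq R(s,\infty,\{0,\dots,s-1\})$) with Lemma~\ref{lem::goodCodes} applied to $K=\{0,1,\dots,s-1\}$, which is precisely your two-line composition. No further argument is needed, and your recap of the mechanism inside Lemma~\ref{lem::sDef} is consistent with the paper's.
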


The best previously known lower bounds for the case $s>2$ are given by disjunctive list-decoding codes with the length of the list ${L\to\infty}$~\cite{d2014bounds}. In Table~\ref{table::comp} we compare bounds given by Theorem~\ref{th::sDef} with the best previously known lower bounds.

\begin{table}[htbp]
\renewcommand{\arraystretch}{1.3}
\caption{Comparison of old and new lower bounds on the rate $R_2(s)$}
\label{table::comp}
\begin{center}
\begin{tabular}{ ||c||c|c|c|c|c|| } 
\hline
$s$ & 3 & 4 & 5 & 6\\
\hline
\text{old} & 0.199 & 0.145 & 0.114 & 0.094\\
\hline
\text{new} & 0.3219 & 0.199 & 0.145 & 0.114\\
\hline
\end{tabular}
\end{center}
\end{table}

Note that the new lower bound for $s+1$ defective elements coincides with the old lower bound for $s$ defective elements. It is easy to show that $(s, L, \{s-1\})$-good code is a $s$-disjunctive list-decoding code with a list of size $L$. Therefore, a new bound for $s$ defectives can't be better than an old bound for $s+1$ defectives. In particular, it means that a new algorithm doesn't improve the previously best known bound for ${s\to\infty}$.

\section{Finding $m$ out of $s$ defectives}\label{sec::m_out_of_s}
The technique developed in the previous sections can be used to find only part of the defectives. Suppose that we want to find only $m$ out of $s$ defectives.  Define $N_p(t,s,m)$ to be the minimal worst-case total number of tests needed to find $m$ out of $s$ defective members in a set of $t$ samples using at most $p$ stages. Also define the optimal \textit{rate} of $p$-stage search algorithm as
$$
R_p(s, m)=\varliminf_{t\to \infty}\frac{\log_2t}{N_p(t, s, m)}.
$$
The problem of finding $m$ out of $s$ defectives with the help of non-adaptive algorithms was formulated in~\cite{csros2005single} for $m=1$ and in~\cite{laczay2006multiple} for the general case. The best results were obtained in~\cite{alon2006tracing,alon2007tracing}, where the following bound was proved
$$
R(s, m)\geq \min\left(\frac{c_1}{s}, \frac{c_2}{m^2}\right)
$$
for some constants $c_1$ and $c_2$.

For two-stage algorithms, we don't know if it is possible to find all defectives with the rate $1/s$. But it turns out that we can find at least half of the defectives with this rate.

\begin{theorem}\label{th::m_out_of_s}
\begin{equation}\label{eq::m_out_of_s}
R_2(s, \lfloor s/2\rfloor+1)\geq\frac{1}{s}.
\end{equation}
\end{theorem}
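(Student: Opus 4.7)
The plan is to adapt the hypergraph framework of Lemma~\ref{lem::sDef} to the setting of partial recovery. Since we need only $m = \lfloor s/2\rfloor + 1$ of the $s$ defectives, and $m > s/2$, the first-stage code can be required to avoid a strictly smaller family of bad configurations than the one used in Theorem~\ref{th::sDef}.

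Concretely, I would use at the first stage a $(s,L,K)$-good code $X$ with the relaxed choice $K = \{0,1,\dots,m-1\}$ in place of $K = \{0,\dots,s-1\}$. An analogue of Proposition~\ref{pr::hypergraph}, based on the sunflower lemma and Ramsey's theorem, should then show that for every outcome $y$ the hypergraph $H(X,s,y)$ either has only $O_{s,L}(1)$ edges or contains an $L$-petal sunflower whose core $U$ has size at least $m$. In the first case the second stage tests every non-isolated vertex, recovering all defectives and hence certainly $m$ of them. In the second case the second stage tests every element of $U$; provided the claim $|U \cap S_{un}| \geq m$ holds, this already reveals at least $m$ defectives. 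In both cases the number of second-stage tests depends only on $s$ and $L$, so the asymptotic rate is governed by the length of the first-stage code.

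A random coding argument analogous to Lemma~\ref{lem::goodCodes}, now minimizing $\underline R(s,k,w)$ only over $k \in \{0,\dots,m-1\}$, then lower-bounds the rate by $\sup_{0<w<1} \min_{0 \leq k < m} \underline R(s,k,w)$. The next step is to verify that this supremum is at least $1/s$ for $m = \lfloor s/2 \rfloor + 1$, with the binding case $k = m-1$ and an optimizing column weight $w$ of order $1/s$, paralleling the $w = 1 - \sqrt{2}/2$ computation used for $R_2(2) = 1/2$ in Theorem~\ref{th::2def}.

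The main obstacle is the combinatorial claim $|U \cap S_{un}| \geq m$ in the sunflower case. If $S_{un}$ happens to coincide with one of the $L$ sunflower edges then $U \subseteq S_{un}$ is immediate. If $S_{un}$ is some other edge of $H$ then, using the pairwise disjointness of the petals $p_i$ and the identity $\sum_i |S_{un} \cap p_i| \leq |S_{un}| = s$, one sees that $|S_{un} \cap e_i| = |S_{un} \cap U|$ for all but at most $s$ indices $i$; the plan is to argue that if this common value were below $m$ then $S_{un}$ together with sufficiently many sunflower edges would form a configuration that should be forbidden by a suitably chosen strengthening of the $(s,L,K)$-good condition. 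Designing this strengthening so that it both correctly forces $|U \cap S_{un}| \geq m$ and still permits rate $1/s$ under random coding is the crux of the argument, and the reason the threshold occurs precisely at $m = \lfloor s/2\rfloor + 1$ is that $m > s/2$ is exactly what makes the petal-counting step close.
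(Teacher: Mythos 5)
There is a genuine gap, and you have correctly located it yourself: the claim $|U\cap S_{un}|\geq m$ in the sunflower case is never established, and it cannot be derived from the $(s,L,K)$-goodness of $X$ as defined. If $S_{un}$ is an edge of $H(X,s,y)$ outside the sunflower, your petal-counting step only shows that $S_{un}\cap e_i=S_{un}\cap U$ for all but at most $s$ petals; but these pairwise intersections $S_{un}\cap U$ are in general a \emph{different} set from the common core $U$ of the $e_i$'s, so the family $\{S_{un},e_{i_1},\dots\}$ is not an $(s,L,k)$-bad configuration in the paper's sense (which requires \emph{all} pairwise intersections to coincide with one set). The ``suitably chosen strengthening'' of the good-code condition that you defer to is exactly the content you would need to supply, and without it the second stage can miss most of $S_{un}$.

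The paper closes this step with a different and simpler device that avoids identifying any sunflower core at all: from the edge set $E$ of $H(X,s,y)$ it greedily extracts a maximal subfamily $E_1$ whose edges pairwise intersect in at most $\lfloor s/2\rfloor$ vertices, and tests every non-isolated vertex of $(V,E_1)$. Maximality guarantees that every edge of $E\setminus E_1$ (in particular $S_{un}$, whichever edge it is) meets some member of $E_1$ in at least $\lfloor s/2\rfloor+1$ vertices, so at least $m$ defectives are tested; and $E_1$ contains no $(s,L,k)$-bad configuration for \emph{any} $k$ (for $k\leq\lfloor s/2\rfloor$ because $X$ is good, for $k>\lfloor s/2\rfloor$ because such intersections simply do not occur in $E_1$), so Proposition~\ref{pr::hypergraph} bounds $|E_1|$ by a constant. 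Your choice of code ($K=\{0,\dots,\lfloor s/2\rfloor\}$) and your rate computation (random coding restricted to $k\in K$, giving $\sup_w\min_{k\in K}\underline{R}(s,k,w)\geq 1/s$ at $w=1-2^{-1/s}$) match the paper, except that the binding value in the minimum over $k$ turns out to be $k=0$ rather than $k=m-1$; but the combinatorial heart of the second stage is missing from your argument.
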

\begin{proof}[Proof of Theorem~\ref{th::m_out_of_s}]
This proof is based on the following technical lemma.
\begin{lemma}\label{lem::m_out_of_s}
\begin{equation}
R(s, \infty, \{0, 1, \ldots, \lfloor s/2\rfloor\})\geq\frac{1}{s}.
\end{equation}
\end{lemma}
The proof of this lemma is postponed to the Appendix. Let us show how it implies the theorem.
Use $(s, L,  \{0, 1, \ldots, \lfloor s/2\rfloor\})$-good code $X$ of size $t$ as a test matrix at the first stage. Then for any outcome vector $y$ hypergraph $H(X, s, y)=(V, E)$ doesn't contain $(s, L, k)$-bad configurations for $k=0, 1, \ldots \lfloor s/2\rfloor$. Form subset of edges $E_1\subset E$ such that for every $e_1, e_2\in E_1$ the intersection $e_1\cap e_2$ has at most $\lfloor s/2\rfloor$ vertices, and for every $e_1\in E_1$, $e_2\in E\setminus E_1$ the intersection $e_1\cap e_2$ has at least $\lfloor s/2\rfloor + 1$ vertices. Such subset can be constructed greedily by adding edges one by one while it is possible. At the second stage we test all non-isolated vertices of the hypergraph $H_1=(V, E_1)$. At least $\lfloor s/2\rfloor+1$ vertices of every edge $e\in E$ will be tested by construction of $E_1$, thus, we will find at lest $\lfloor s/2\rfloor+1$ defectives.
The number of tests at the second stage is upper bounded by $s\|E_1\|$. 
The hypergraph $H_1=(V, E_1)$ doesn't contain any $(s, L, k)$-bad configurations for $k=0, 1, \ldots, s-1$. Using Proposition~\ref{pr::hypergraph} we conclude that $|E_1|\leq c(s, L)$. Therefore, the number of tests at the second stage doesn't depend on the total number of elements $t$. Taking limits $t\to\infty$ and $L\to\infty$ and using lemma~\ref{lem::m_out_of_s} we obtain the desired inequality~\eqref{eq::m_out_of_s}.
\end{proof}

\section{Conclusion}\label{sec::conclusion}

A new algorithm for two-stage group testing was proposed, which improves previously known results. For the case of 2 defectives, this algorithm has the optimal rate 0.5. Also, a two-stage algorithm which finds at least half of the defectives with the rate $1/s$ was constructed. 

Development of the algorithm, which will achieve the optimal rate for the number of defectives greater than 2, is a natural open problem. Another interesting task is to obtain an upper bound on the rate $R_p(s)$, $p>1$, which is stronger than information-theoretic bound $1/s$.

We note that the technique used in this paper could be also applied to other group testing models, such as, for example, symmetric or threshold group testing.

%%%%%%
%% Appendix:
%% If needed a single appendix is created by
%%
%\appendix
%%
%% If several appendices are needed, then the command
%%
% \appendices
%%
%% in combination with further \section-commands can be used.
%%%%%%

\section{Acknowledgement}
I. Vorobyev was supported in part by RFBR through grant nos. 18-07-01427~A, 18-31-00361~MOL\_A.
    
%%%%%%
%% To balance the columns at the last page of the paper use this
%% command:
%%
%\enlargethispage{-1.2cm} 
%%
%% If the balancing should occur in the middle of the references, use
%% the following trigger:
%%
%\IEEEtriggeratref{3}
%%
%% which triggers a \newpage (i.e., new column) just before the given
%% reference number. Note that you need to adapt this if you modify
%% the paper.  The "triggered" command can be changed if desired:
%%
%\IEEEtriggercmd{\enlargethispage{-20cm}}
%%
%%%%%%

\bibliographystyle{IEEEtran}
\bibliography{2StageGT}

% Generated by IEEEtran.bst, version: 1.14 (2015/08/26)
\begin{thebibliography}{10}
\providecommand{\url}[1]{#1}
\csname url@samestyle\endcsname
\providecommand{\newblock}{\relax}
\providecommand{\bibinfo}[2]{#2}
\providecommand{\BIBentrySTDinterwordspacing}{\spaceskip=0pt\relax}
\providecommand{\BIBentryALTinterwordstretchfactor}{4}
\providecommand{\BIBentryALTinterwordspacing}{\spaceskip=\fontdimen2\font plus
\BIBentryALTinterwordstretchfactor\fontdimen3\font minus
  \fontdimen4\font\relax}
\providecommand{\BIBforeignlanguage}[2]{{%
\expandafter\ifx\csname l@#1\endcsname\relax
\typeout{** WARNING: IEEEtran.bst: No hyphenation pattern has been}%
\typeout{** loaded for the language `#1'. Using the pattern for}%
\typeout{** the default language instead.}%
\else
\language=\csname l@#1\endcsname
\fi
#2}}
\providecommand{\BIBdecl}{\relax}
\BIBdecl

\bibitem{dorfman1943detection}
R.~Dorfman, ``The detection of defective members of large populations,''
  \emph{The Annals of Mathematical Statistics}, vol.~14, no.~4, pp. 436--440,
  1943.

\bibitem{d1982bounds}
A.~G. D'yachkov and V.~V. Rykov, ``Bounds on the length of disjunctive codes,''
  \emph{Problemy Peredachi Informatsii}, vol.~18, no.~3, pp. 7--13, 1982.

\bibitem{ruszinko1994upper}
M.~Ruszink{\'o}, ``On the upper bound of the size of the r-cover-free
  families,'' \emph{Journal of Combinatorial Theory, Series A}, vol.~66, no.~2,
  pp. 302--310, 1994.

\bibitem{de2005optimal}
A.~De~Bonis, L.~Gasieniec, and U.~Vaccaro, ``Optimal two-stage algorithms for
  group testing problems,'' \emph{SIAM Journal on Computing}, vol.~34, no.~5,
  pp. 1253--1270, 2005.

\bibitem{rashad1990random}
A.~Rashad, ``Random coding bounds on the rate for list-decoding superimposed
  codes,'' \emph{Problems of Control and Information Theory}, vol.~19, no.~2,
  pp. 141--149, 1990.

\bibitem{d2014lectures}
A.~G. D'yachkov, ``Lectures on designing screening experiments,'' \emph{arXiv
  preprint arXiv:1401.7505}, 2014.

\bibitem{du2000combinatorial}
D.~Du, F.~K. Hwang, and F.~Hwang, \emph{Combinatorial group testing and its
  applications}.\hskip 1em plus 0.5em minus 0.4em\relax World Scientific, 2000,
  vol.~12.

\bibitem{Cicalese2013Fault-TolerantAlgorithms}
F.~Cicalese, \emph{{Fault-Tolerant Search Algorithms}}, ser. Monographs in
  Theoretical Computer Science. An EATCS Series.\hskip 1em plus 0.5em minus
  0.4em\relax Springer Berlin Heidelberg, 2013.

\bibitem{freidlina1975design}
V.~L. Freidlina, ``On a design problem for screening experiments,''
  \emph{Theory of Probability \& Its Applications}, vol.~20, no.~1, pp.
  102--115, 1975.

\bibitem{mezard2011group}
M.~M{\'e}zard and C.~Toninelli, ``Group testing with random pools: Optimal
  two-stage algorithms,'' \emph{IEEE Transactions on Information Theory},
  vol.~57, no.~3, pp. 1736--1745, 2011.

\bibitem{johnson2019performance}
O.~Johnson, M.~Aldridge, and J.~Scarlett, ``Performance of group testing
  algorithms with near-constant tests per item,'' \emph{IEEE Transactions on
  Information Theory}, vol.~65, no.~2, pp. 707--723, 2019.

\bibitem{d2014bounds}
A.~G. D’yachkov, I.~V. Vorob’ev, N.~Polyansky, and V.~Y. Shchukin, ``Bounds
  on the rate of disjunctive codes,'' \emph{Problems of Information
  Transmission}, vol.~50, no.~1, pp. 27--56, 2014.

\bibitem{coppersmith1998new}
D.~Coppersmith and J.~B. Shearer, ``New bounds for union-free families of
  sets,'' \emph{the electronic journal of combinatorics}, vol.~5, no.~1, p.~39,
  1998.

\bibitem{dyachkov1983survey}
A.~G. Dyachkov and V.~V. Rykov, ``A survey of superimposed code theory,''
  \emph{Problems of Control and Information Theory}, vol.~12, no.~4, pp. 1--13,
  1983.

\bibitem{damaschke2014strict}
P.~Damaschke, A.~S. Muhammad, and G.~Wiener, ``Strict group testing and the set
  basis problem,'' \emph{Journal of Combinatorial Theory, Series A}, vol. 126,
  pp. 70--91, 2014.

\bibitem{damaschke2013toolbox}
P.~Damaschke and A.~S. Muhammad, ``A toolbox for provably optimal multistage
  strict group testing strategies,'' in \emph{International Computing and
  Combinatorics Conference}.\hskip 1em plus 0.5em minus 0.4em\relax Springer,
  2013, pp. 446--457.

\bibitem{d2016hypergraph}
A.~G. D'yachkov, I.~V. Vorobyev, N.~Polyanskii, and V.~Y. Shchukin, ``On a
  hypergraph approach to multistage group testing problems,'' in
  \emph{Information Theory (ISIT), 2016 IEEE International Symposium on}.\hskip
  1em plus 0.5em minus 0.4em\relax IEEE, 2016, pp. 1183--1191.

\bibitem{csros2005single}
M.~Csros and M.~Ruszink{\'o}, ``Single-user tracing and disjointly superimposed
  codes,'' \emph{IEEE transactions on information theory}, vol.~51, no.~4, pp.
  1606--1611, 2005.

\bibitem{laczay2006multiple}
B.~Laczay and M.~Ruszink{\'o}, ``Multiple user tracing codes,'' in
  \emph{Information Theory, 2006 IEEE International Symposium on}.\hskip 1em
  plus 0.5em minus 0.4em\relax IEEE, 2006, pp. 1900--1904.

\bibitem{alon2006tracing}
N.~Alon and V.~Asodi, ``Tracing a single user,'' \emph{European Journal of
  Combinatorics}, vol.~27, no.~8, pp. 1227--1234, 2006.

\bibitem{alon2007tracing}
------, ``Tracing many users with almost no rate penalty,'' \emph{IEEE
  transactions on information theory}, vol.~53, no.~1, pp. 437--439, 2007.

\bibitem{d2015almost}
A.~G. D’yachkov, I.~V. Vorob’ev, N.~Polyansky, and V.~Y. Shchukin, ``Almost
  disjunctive list-decoding codes,'' \emph{Problems of Information
  Transmission}, vol.~51, no.~2, pp. 110--131, 2015.

\end{thebibliography}
%\end{document}
\appendix
\begin{proof}[Proof of Lemma~\ref{lem::goodCodes}]
Consider a random matrix $X$ of size $N\times t$, $t=2^{\lfloor RN\rfloor}$, each column of which is chosen independently and uniformly from the set of all columns of weight $wN$, $0<w<1$. To keep the notation simple we ignore the fact that $wN$ may not be an integer. We want to prove that with a positive probability hypergraph $H(X, s, y)$ doesn't contain bad configurations for any outcome vector ${y\in\{0,\;1\}^N}$.

%Fix an integer $L$
%and consider a $s$-uniform hypergraph ${H_y=H(X, y)}$. Call the set of edges $e_1, e_2, \ldots, e_L$ a bad configuration of type $k$ if $e_i\cap e_j=U$, $|U|=k$, for any $i$ and $j$. In other words, bad configuration of type $k$ consists of $L$ edges such that the intersection of every two edges is the same set of size $k$. 

Estimate the mathematical expectation of the number of bad configurations $\xi_k$ for $k=0,\ldots,s-1$. Fix outcome vector $y$ of weight $qN$. Denote the weight of the union of $k$ common vertexes of configuration as $q_0N$, the weight of the union of additional $s-k$ vertexes as $q_1N$.
Let $P(l, q, Q)$ be equal to the probability that the weight of the union of $l$ random columns equals $Q$, where each column is chosen independently and uniformly from the set of all columns of weight $q$. Then the following inequality holds true. 
\begin{multline*}
E \xi_k
\leq 2^{N}
\max\limits_{I(k>0)w\leq q_0\leq kw} 
t^kN\binom{N}{q_0N}
\frac{P(k, wN, q_0N)}{\binom{N}{q_0N}}\\
\times\max\limits_{\substack{q_0<q<sw,\\w<q_1<(s-k)w}} 
\left(t^{s-k}N\binom{q_0N}{(q-q_1)N}\frac{P(s-k, wN, q_1N)}{\binom{N}{q_1N}}\right)^L,
%\leq \\
%\leq\frac{1}{s}.
\end{multline*}
where $\frac{P(\ell, wN, q_iN)}{\binom{N}{q_iN}}$ is a probability that the union of $\ell$ columns gives a specific column of weight $q_iN$, $t^k\times t^{(s-k)L}$ is an upper bound for the number of ways to choose indexes of columns, $2^{N}\times N\binom{N}{q_0N}\times N\binom{q_0N}{(q-q_1)N}$ is an upper bound for the number of ways to choose an outcome vector $y$, a weight $q_0N$, a vector of weight $q_0N$, a weight $q_1N$ and a vector of weight $q_1N$ such that disjunctive union of this vector with a fixed vector of weight $q_0N$ gives a fixed vector of weight $qN$. 
Let the function $A(l, q, Q)$ be defined by
$$
A(l, q, Q) = -\lim\limits_{N\to\infty}\frac{\log_2P(l, q, Q)}{N}.
$$
We use the representation~\eqref{eq::functionA}-\eqref{eq::y} of the function $A(l, q, Q)$, which was established in Theorem~4 of paper~\cite{d2015almost}.

We want to prove that $E\xi_k<1/s$. Taking the logarithm and dividing both parts by $NL$, then taking a limit $\lim\limits_{L\to\infty}\lim\limits_{N\to\infty}$, we obtain a sufficient condition for the inequality $E\xi_k<1/s$.
\begin{multline}\label{ineq}
\sup\limits_{*}(s-k)R-A(s-k, w, q_1)\\
+q_0h\left(\frac{q-q_1}{q_0}\right)-h(q_1)\leq o(1),
\end{multline}
\begin{multline}
*=I(k>0)w\leq q_0\leq kw, \max(q_0, q_1)\leq q\leq \min(sw, 1) \\
w\leq q_1\leq (s-k)w,
\end{multline}
where $w_0h\left(\frac{w-w_1}{w_0}\right)$ is defined to be equal to $0$ at $q_0=0$. Note that left-hand side of~\eqref{ineq} is an increasing function of $q_0$, therefore, either $q_0=kw$ or $q_0=q$.

\begin{enumerate}
\item
If $q_0=kw$, then we obtain the condition
\begin{multline}\label{cond0}
\sup\limits_{*}(s-k)R-A(s-k, w, q_1)\\
+kwh\left(\frac{q-q_1}{kw}\right)-h(q_1)\leq o(1),
\end{multline}
$$
*=\max(kw, q_1)\leq q\leq \min(sw, 1) w\leq q_1\leq (s-k)w.
$$
We have to consider two more cases.

\begin{enumerate}
\item
Case $q_1\leq kw/2$. Maximum is attained at $q=kw$. This leads to 
\begin{multline}\label{cond1}
\sup\limits_{w\leq q_1\leq \min((s-k)w, kw/2)}(s-k)R-A(s-k, w, q_1)\\
+kwh\left(\frac{q_1}{kw}\right)-h(q_1)\leq o(1),
\end{multline}
which is equivalent to $R\leq \underline{R}_2(s, k , w)$, where $\underline{R}_2(s, k , w)$ is defined in~\eqref{thCond2}.
\item
Case $q_1\geq kw/2$. In this case optimal $q$ is equal to $q_1+kw/2$. Then condition~\eqref{cond0} transforms into
\begin{equation}\label{cond2}
\sup\limits_{*}
(s-k)R-A(s-k, w, q_1)+kw-h(q_1)\leq o(1),
\end{equation}
$$*=
\max(w, kw/2)\leq q_1\leq (s-k)w,
$$
which is equivalent to $R\leq \underline{R}_1(s, k , w)$, where $\underline{R}_1(s, k , w)$ is defined in~\eqref{thCond1}.
\end{enumerate}
\item
If $q_0=q$, then we obtain
\begin{multline}\label{eq::2}
\sup\limits_{*}
(s-k)R-A(s-k, w, q_1)\\
+qh\left(\frac{q_1}{q}\right)-h(q_1)\leq o(1),
\end{multline}
$$
*=q_1\leq q\leq kw, w\leq q_1\leq (s-k)w.
$$
The left-hand side of~\eqref{eq::2} is an increasing function of $q$, thus we put $q=kw$. This leads to
\begin{multline}\label{cond3}
\sup\limits_{w\leq q_1\leq \min(kw, (s-k)w)}(s-k)R-A(s-k, w, q_1)\\
+kwh\left(\frac{q_1}{kw}\right)-h(q_1)\leq o(1).
\end{multline}
Note that condition~\eqref{cond3} is weaker than condition~\eqref{cond2} for $q_1\geq kw/2$; for $q_1\leq kw/2$ it coincides with condition~\eqref{cond1}. Therefore, it can be omitted.
\end{enumerate}
\end{proof}

\begin{proof}[Proof of Lemma~\ref{lem::m_out_of_s}]

Consider function $f(q)=A(s, w, q) + h(q)$ as a function of $q$, $w\leq q\leq\min(sw, 1)$. 
\begin{proposition}\label{pr::der}
Functions $f(q)$ attains its minimal value at the point $q_{min}=\frac{w}{2\left(1-2^{-\frac{1}{s}}\right)}$.
\end{proposition}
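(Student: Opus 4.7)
The plan is to reparametrize the problem in terms of $y$ rather than $q$. By~\eqref{eq::y}, the map $y \mapsto q(y) = w\,u(y)$ with $u(y) := (1-y^s)/(1-y) = 1 + y + \cdots + y^{s-1}$ is a smooth strictly increasing bijection from $(0,1)$ onto $(w, sw)$, so interior critical points of $f$ in $q$ correspond to zeros of $F'(y)$ where $F(y) := f(q(y))$.

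The first step is to simplify $f$. The relation $q(1-y) = w(1-y^s)$ rewrites the argument of the middle logarithm in $A(s,w,q)$ as
\begin{equation*}
\frac{wy^s}{q(1-y)} = \frac{y^s}{1-y^s},
\end{equation*}
and the $(1-q)\log_2(1-q)$ term inside $A(s,w,q)$ cancels the corresponding term in $h(q) = -q\log_2 q - (1-q)\log_2(1-q)$. What remains is the clean expression
\begin{equation*}
F(y) = q(y)\log_2\frac{y^s}{1-y^s} + sw\log_2\frac{1-y}{y} + sh(w).
\end{equation*}

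Next, differentiate. A direct computation using $\bigl(\log_2\tfrac{y^s}{1-y^s}\bigr)' = \tfrac{s}{y(1-y^s)\ln 2}$ and $\bigl(\log_2\tfrac{1-y}{y}\bigr)' = -\tfrac{1}{y(1-y)\ln 2}$ yields
\begin{equation*}
F'(y) = w\,u'(y)\log_2\frac{y^s}{1-y^s} + \frac{sw}{y\ln 2}\left[\frac{u(y)}{1-y^s} - \frac{1}{1-y}\right].
\end{equation*}
The bracketed term vanishes identically, since $u(y)(1-y) = 1-y^s$, so the derivative collapses to
\begin{equation*}
F'(y) = w\,u'(y)\log_2\frac{y^s}{1-y^s}.
\end{equation*}
Because $u'(y) = 1 + 2y + \cdots + (s-1)y^{s-2} > 0$ on $(0,1)$, the unique zero is at $y^s = 1/2$, i.e.\ $y_{\min} = 2^{-1/s}$. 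Substituting back gives $q_{\min} = w(1-\tfrac12)/(1-2^{-1/s}) = w/\bigl[2(1-2^{-1/s})\bigr]$. Moreover $F'(y) < 0$ for $y < y_{\min}$ and $F'(y) > 0$ for $y > y_{\min}$, so $q_{\min}$ is the unique global minimum over the admissible range.

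The main obstacle is spotting the two algebraic simplifications that make everything work: that $wy^s/(q(1-y))$ collapses to $y^s/(1-y^s)$ under~\eqref{eq::y}, and the subsequent cancellation $u(y)/(1-y^s) = 1/(1-y)$ that wipes out all but the single $w u'(y)\log_2(\cdot)$ term in $F'(y)$. Without switching the independent variable to $y$, the implicit derivative $dy/dq$ clutters the computation and conceals these cancellations.
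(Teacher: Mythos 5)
Your proof is correct and follows the same route as the paper: rewrite $f$ as a function of $y$ via \eqref{eq::y}, differentiate, and locate the critical point at $y=2^{-1/s}$, which maps back to $q_{\min}=\frac{w}{2(1-2^{-1/s})}$. The paper states this in one line without computation; you supply the cancellations (the collapse of $\frac{wy^s}{q(1-y)}$ to $\frac{y^s}{1-y^s}$ and the vanishing bracket in $F'(y)$) that make the derivative tractable, and additionally verify via the sign of $F'$ that the critical point is a global minimum.
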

\begin{proof}[Proof of Proposition~\ref{pr::der}]
Using the representation~\eqref{eq::functionA}-\eqref{eq::y} we represent $f(q)$ as a function of $y$. Taking derivative we find that it attains its minimal value at the point $y=2^{-1/s}$. Since there is a bijection between $q$ and $y$, we conclude that $f(q)$ attains its minimal value at the corresponding point $q_{min}$.
\end{proof}
Let $K$ be a set $\{0, 1, \ldots, \lfloor s/2\rfloor\}$. The following chain of inequalities holds.
\begin{multline}
R(s,\infty, K)\geq \sup\limits_{0<w<1}\min\limits_{k\in K}\min(\underline{R}_1(s, k , w), \underline{R}_2(s, k , w))\geq\\
\sup\limits_{0<w<1}\min\limits_{k\in K}\frac{A(s-k, w, q_{min})-kw+h(q_{min})}{s-k}=\\
\sup\limits_{0<w<1}\min\limits_{k\in K}h(w)+w\left(\log_2\left(2^{1/(s-k)}-1\right)-\frac{k}{s-k}\right).
\end{multline}
Function $\log_2\left(2^{1/(s-k)}-1\right)-\frac{k}{s-k}$ is a concave function of $k$, therefore, its minimum is achieved either at $k=0$ or $k=\lfloor s/2\rfloor$. It is easy to check that the minimum is attained at $k=0$. Thus, 
\begin{multline}
R(s,\infty, K)\geq 
\sup\limits_{0<w<1}h(w)+w\left(\log_2\left(2^{1/s}-1\right)\right), 
\end{multline}
which leads to
$$
R(s,\infty, K)\geq \frac{1}{s}
$$
for $w=1-2^{-1/s}$.

\end{proof}

\end{document}